\newcommand{\mR}{\mathbb{R}}
\newcommand{\mc}{\mathcal}
\newcommand{\Tha}{\Theta}
\newcommand{\tha}{\theta}
\renewcommand{\leq}{\leqslant}
\renewcommand{\geq}{\geqslant}
\newtheorem{prop}{Proposition}
\newtheorem{Def}{Definition}
\newtheorem{assume}{Assumption}
\newtheorem{coro}{Corollary}
\newenvironment{proof}[1][Proof]{\noindent\textbf{#1.} }{\ \rule{0.5em}{0.5em}}
\newcolumntype{L}[1]{>{\raggedright\let\newline\\arraybackslash\hspace{0pt}}m{#1}}
\newcolumntype{C}[1]{>{\centering\let\newline\\arraybackslash\hspace{0pt}}m{#1}}
\newcolumntype{R}[1]{>{\raggedleft\let\newline\\arraybackslash\hspace{0pt}}m{#1}}
\begin{document}

\begin{titlepage}
\title{Understanding Police Force Resource Allocation using Adversarial Optimal Transport with Incomplete Information}
\author{Yinan Hu\thanks{Tandon School of Engineering, New York University, Brooklyn, NY, 11201} \and Juntao Chen \thanks{Graduate School of Art and Sciences, Fordham University, Bronx, NY, 10458}\and Quanyan Zhu\thanks{the same as 1st author}}
\date{\today}
\maketitle
\begin{abstract}
\noindent Adversarial optimal transport has been proven useful as a mathematical formulation to model resource allocation problems to maximize the efficiency of transportation with an adversary, who modifies the data. It is often the case, however, that only the adversary knows which nodes are malicious and which are not. In this paper we formulate the problem of seeking adversarial optimal transport into Bayesian games. We construct the concept of Bayesian equilibrium and design a distributed algorithm that achieve those equilibria, making our model applicable to large-scale networks.  
\vspace{0in}\\
\noindent\textbf{Keywords:} game theory, crime control, Markov games\\
\vspace{0in}\\
\noindent\textbf{JEL Codes:} key1, key2, key3\\

\bigskip
\end{abstract}
\setcounter{page}{0}
\thispagestyle{empty}
\end{titlepage}
\pagebreak \newpage
\doublespacing

\section{Introduction}
The problem of resource matching and allocation has a wide application in many areas such as wireless networks \cite{georgiadis2006resource_allocation_network}, data centers \cite{ferris2013src_data_center}. The ultimate goal of resource allocation is to find a plan to distribute resources to the target so as to maximize the resource efficiency. Optimal transport \cite{kantorovich1942translocation} is one of the centralized schemes to find plans for resource matching and allocations \cite{galichon2018ot_economics,villani2008optimal}. 

In a standard network transport scheme, a dispatcher designs the resource allocation plan that maximizes the accumulated utility function for all dispatchers \cite{juntao2021fair_dynamic_ot} under the assumption that the data received from the dispatchers are truthful. This assumption may not always hold when some dispatchers present selfish \cite{felegyhazi2006nash_eq_ad_hoc_networks,raya2008selfish_behavior_dynamic} or even malicious behaviors when submitting their data \cite{buttyan2007security_coop_network}. Malicious nodes modify the data sent to the dispatcher to disrupt the fairness or the efficiency of the resulting transport plan. Previously there are works\cite{juntao2021adv_ot} that develop a game-theoretic formulation to characterize the interactions between the adversary and the dispatcher in resource allocation processes under the assumption that all the information in the game is common knowledge. That is, the dispatcher knows before the game all the possible modifications that the adversary may adopt upon the data, while the adversary is aware of all possible transport plans and they both know what their opponent know, and so on. However, a general game between an adversary and a defender is one with incomplete information (see chapter 9 of \cite{zamir_game_theory_cambridge}) where the defender is not aware of the adversary's the benignity or malice. In a network transport scenario, where there are multiple target nodes that receive resources, the dispatcher may not know which target nodes are benign and truthfully report their data and which ones are malicious with the intent to modify the data along the links connected to them. A malicious target node may alter the data along the link connected to one particular source node or along the ones connected to several source nodes. 

\begin{comment}
\textcolor{red}{Why modelling asymmetric information structure is important? With concrete examples state why it is important in security to model with asymmetric models. Any examples when defender does not know some information? Is it OK if we gamble on one mode? Why associate with a belief?}
\end{comment}

To this end, we propose a Bayesian game formulation to characterize the asymmetric information and adopt the concept of `types', raised in Harsanyi's model \cite{harsanyi1967games_I,harsanyi1968bayesian_games_II} to capture the adversary's attack mode. Each of the adversary's type represent the links on which the data are compromised. The Bayesian game formulation allows the dispatcher to associate a common distribution over all possible types, so his goal is to maximize the expected utility on the distribution over the type space. The adversary aims at minimizing the aggregated utility of all nodes over his possible modifications given his type. The proposed scenario concerns a variety of attacks in resilient resource allocations such as jamming attacks \cite{garnaev2014jamming_attack} and network topology attacks \cite{shao2020_topology_attack}. 

The planner's utility functions are usually monotonic with respect to the `weights' along the links in the network \cite{srikant2013network_optimization}. The amount of resources transported along links with large weight coefficients tend to dominate in the network and thus causes non-smooth transport plans.  Inspired by \cite{cuturi2013sinkhorn}, we add an extra term, named a regularizer \cite{ferradans2014regularized_ot}, in the utility function with a coefficient that leverages between the solving traditional optimal transport problem and preserving the smoothness of the transport plan. In particular, we introduce the entropic  regularizer because the corresponding regularized optimal transport problems can be solved in distributed algorithms and are thus scalable to large networks \cite{genevay2016stochastic}. 

A centralized planning of the transport within the network requires full knowledge of the data on all links between source nodes and target nodes. As a result, the computational complexity for an optimal transport with a dispatcher grows exponentially with the number of participants. Our goal is to design a decentralized algorithm to obtain a resilient transport plan so that such algorithms can be applied in large-scale networks. We obtain the best responses for source nodes through asynchronous dual pricing (ADP) algorithm \cite{huang2006distributedADP} so that every source node only needs to solve its own Bayesian game against target nodes it connects to.    

The contributions of our paper are threefold: 1) we extend the zero-sum game-theoretic framework raised by \cite{juntao2021adv_ot} into a Bayesian game formulation to capture the adversary's private information. We also extend the previous static game into a multistage dynamic game formulation to capture the adversary's action in altering the belief updates;  2) We introduce an entropic regularizer in the dispatchers' utility functions, smoothing the resulting transport plans; 3) we develop a distributed algorithm to solve our Bayesian game model so that our formulation is applicable to large-scale networks, especially those with ephemeral nodes and thus overcomes the curse of dimensionality.   4) we apply our dynamic adversarial optimal transport scheme to help developing an optimal plan of allocating police resources to combat opportunistic criminals. 

The rest of the paper is organized as follows. In section \ref{sec:formulation} we first consider the problem of regularized optimal transport over networks without adversary, after which we introduce the  adversary that compromises the utility function according to his type, which is unknown to the dispatcher. In section \ref{sec:compute_bayesian_equilibrium} we come up with a distributed algorithm that solves the pairs of the best response of strategies. We demonstrate our simple example in section \ref{sec:numerical_demo}. Finally we conclude our paper in section \ref{sec:conclusion}.

\subsection{Related work}
Learning adaptive learning has been a useful tool to study the crime neighborhood patrolling. 

The topic of police resource allocation for patrolling has been attracting attention for a long time \cite{curtin2010police_patrol}. Many methods have been implemented to characterize and study how to design optimal patrol plan to counteract both opportunistic criminals and strategic adversarys. Different optimal patrol plans have different goals.  Authors in  minimizing the response time or maximizing the patrol coverage   The proposed framework aims a

Authors in \cite{zhang2016allocation_opportunity_criminals} formulate the patrolling model with dynamical Bayesian networks (DBN) and computes the optimal parameters by learning the criminal's behaviors. 
Author in \cite{hespanha2000pursuit_evasion} proposes a pursuit-evasion game framework to capture the conflicting relationship between patrolling police and criminals. However the goal of criminals in such a game model is only to evade capture and does not include any reward function from committing crimes.  
Authors in \cite{jiang2013game_Stackelberg_uncertainty} adopts a Stackelberg security games (SSG) with randomized strategies model to compute an optimal patrolling schedule against passive criminals while incorporating uncertainty in the execution of the patrolling plan.  Instead of combatting opportunistic, passive crimes, the proposed framework designs a patrolling plan against strategic, organized criminals, who knows the police's patrolling allocation plan by methodical investigations and undermines the plan strategically to counteract the police.

The formulation of optimal transport with adversaries has been serving as a useful tool to be implemented to resource allocation problems..

Efficient methods in computing the problems over networks include.....\cite{lei_ying2013communication}  

Asychronous dual pricing...

The rest of the work is organized as follows.

\section{Formulation of source allocation without adversaries}
\label{sec:formulation}
Although different experts may have different orders, but in general there is an agreement that the allocation of police forces serve the following purposes: deterrence of crimes; apprehension of criminals; providing on-demand non-crime police services to the neighborhoods; building trust and confidence among neighborhoods and a sense of security. 

The evaluation of the police work may include the change in crime rate; reduction in police response time; an increase in arrests, etc.

In this section, we propose a formulation optimal transport with regularizers over a bipartite network. 
\subsection{Discrete regularized optimal transport over networks}
In this paper we study the resource allocation process over a bipartite network $G$ that has $N$ source nodes and $M$ target nodes. We denote the sets of source nodes and target nodes as $J=\{j_1,j_2,\dots,j_N\},Q=\{q_1,q_2,\dots,q_M\}$, respectively. Let $\mc{E}$ be the set of possible edges or links, each of which connects a source node $j\in J$ and a target node $q\in Q$. We denote $\{j,q\}$ as a generic link connecting a source node $j$ and a target node $q$. Let $Q_j$ be the set of links that begin with the source node $j$. Correspondingly, let $J_q$ be the set of links that end with the target node $q$.

To characterize the connectivity within the bipartite network $G$, we introduce a 0-1 matrix $B$ with its entries $(b_{jq})_{\substack{j\in J\\ q\in Q}}$ defined as 
\begin{equation}
    b_{jq} = \begin{cases}
    1 & r = \{j,q\},\; \text{if}\; \{j,q\}\in\mc{E},\\
    0 & \text{otherwise}. 
    \end{cases}
    \label{eq:01mtx}
\end{equation}

We characterize a generic dispatcher's action $x\in X = \mR^{|J|\times |Q|}$ as a matrix representation of the rates of transportation of resource from source nodes to target nodes, or the transport plan. For convenience, we denote a generic entry, $x_{jq}$ to represent the transport rate along the edge $\{j,q\}\in \mc{E}$. Notice the transport rate should always be non-negative: $x\geq 0$ (here `$\geq$' or `$\leq$' stands for element-wise comparisons) For a typical source node $j$, we  associate its finite capacity $c_j>0$ to be limitation of the amount of the resources distributed from the source node $j$. 

\subsubsection{The dispatcher's utility function}
We now formulate the planner's utility function in a two-step way: we first formulate a series of utility functions $\{u_{jq}\}_{\substack{j\in J\\ q\in Q}}: \mR_+\rightarrow \mR_+$ for the planner on the link $r$ (which connects the source node $j$ and the target node $q$); then we extend $u_{jq}$ to a utility function $U:\mR^{|J|\times |Q|}\rightarrow \mR_+$ defined on all transport plans over the network. 
We have the following requirements for a utility function in terms of the transporting rate $x_{jq}$.  
\begin{assume}[Monoticity, convexity, and smoothness]
For all $j\in J, q\in Q,\;\{j,q\}\in\mc{E}$, the utility function $u_{jq}:\mR\rightarrow \mR$ is monotone increasing, concave, continuously differentiable in terms of $x_{jq}\in\mR$. 
\end{assume}

We consider $u_{jq}$ as a sum of two terms: $u_{jq} = v_{jq} + \lambda h_{jq}$, where $v_{jq}:\mR\rightarrow \mR$ could be interpreted as a `revenue function' and $h_{jq}$ is named as a regularizer and  $\lambda$ is a trade-off parameter that leverages the portion of the two functions upon the utility function. The first term $v_{jq}$ represents the `revenue' that the planner makes by transporting the resource at the rate $x_{jq}$.  Possible candidates for $v_{jq}$\cite{zhang2019src_matching} include: linear functions,  log-concave functions, etc, depending on the speed of the increase of the revenue with respect to the resources. We choose $v_{jq} = m_{jq}x_{jq},\;\forall \{j,q\}\in \mc{E}$. The coefficient $m_{jq}$ reflects the `perception'` of crime activities on the route $\{j,q\}$ and larger coefficients lead to a hot spot upon the crime heat map among the neighborhoods. The function $h_{jq}:\mR_+\rightarrow \mR$ in \eqref{eq:transport_src_target}  serves the role of a regularizer \cite{cuturi2013sinkhorn} to prevent the planner to allocate all of the resources to a specific target node. We impose the following assumption on the regularizer function:
\begin{assume}[Fairness]
The fairness functions $h_{jq},\;\{j,q\}\in\mc{E}$ are monotonically increasing, convex and differentiable.  
\end{assume}

Typical examples of regularizers include the Kullback–Leibler (KL)-divergence, total variation (TV) functions \cite{ferradans2014regularized_discrete_OT}, quadratic functions \cite{essid2018quad_reg_OT_graph}. In this paper, we adopt the entropic regularizer for every link: for all $x_{jq}\in \mR_+$,
\begin{equation}
    {h_{jq}(x_{jq})} := -\lambda {x_{jq}\log x_{jq}},\;\forall j\in J,\; q\in Q.
    \label{eq:entropy_regularizer}
\end{equation}
Now we extend $u_{jq}$ to a utility function of transport rates over all connections in the network, which is denoted as $U$. We assume $U$ is separable and additive in terms of the transport rate in every node in the network, that is, there is no explicit coupling between the transportation rate on any other two links. We set the utility function in terms of transport rate $x$ of the network as follows.
\begin{equation}
    U(x) = V(x) + \lambda H(x):= \sum_{\{j,q\}\in \mc{E}}{v_{jq}(x_{jq}) + \lambda h_{jq}(x_{jq})},
    \label{eq:network_utility}
\end{equation}
where we assume $V,H$, are also both separable and additive in terms of the links in the network  $H(x) = \sum_{\{j,q\}\in\mc{E}}{h_{jq}(x_{jq})},\;V(x) = \sum_{\{j,q\}\in\mc{E}}{v_{jq}(x_{jq})}$.
In an adversary-free scenario, the goal of the discrete optimal transport over a network is to find an optimal transport plan $x^*$ that maximizes the planner's the utility function, with the constraint that the transportation of resources does not overwhelm any target rate.  We can state the problem of seeking an optimal transport plan as an optimization problem \cite{lei_ying2013communication} as follows:

\begin{equation}
\begin{aligned}
    \underset{x\geq 0 }{\max}&\sum_{\{j,q\}\in\mc{E}}{u_{jq}(x_{jq})},\\
    \text{s.t.}&\;\;Bx \leq c. 
    \label{eq:transport_src_target}
\end{aligned}
\end{equation}

Referring to the form of objective function in \eqref{eq:transport_src_target}, we write the formulation of optimization problem that the dispatcher faces as follows:

\begin{equation}
\begin{aligned}
    \underset{x\geq 0}{\max}&\;\sum_{\{j,q\}\in\mc{E}}{m_{jq} x_{jq} - \lambda x_{jq}\log x_{jq}},\\
    \text{s.t.}&\;\;B\;\text{vec}(x)\leq c,
    \label{eq:regular_ot}
\end{aligned}
\end{equation}
The resulting formulation is sometimes called regularized optimal transport \cite{cuturi2013sinkhorn}.

For the purpose of demonstration, we first depict the plot to illustrate the role of `fairness' term. Let the number of source nodes and target nodes to be $|J|=2,|Q| = 3$. Let the `perception' matrix $m$ to be 
\begin{equation}
    m = \begin{bmatrix}
    1 & 3 & 5 \\
    2 & 5 & 1
    \end{bmatrix}.
\end{equation}
Let the capacity (maximum source available for each node) to be $c = \begin{bmatrix} 4 & 3 & 5 \end{bmatrix}^T$. Then as a bipartite network where every source node is connected to every destination node, the connection matrix $B\in \mR^{|J|\times |Q|}$ can be written as 
\begin{equation}
    B = \begin{bmatrix}
    1 & 1 & 1 & 0 & 0 & 0 \\
    0 & 0 & 0 & 1 & 1 & 1
    \end{bmatrix}.
\end{equation}
To solve the dispatcher's optimization problem, we first list the Lagrangian
\begin{equation}
    \mc{L}_j(x,p) = \sum_{q}{m_{jq} x_{jq} - \lambda x_{jq}\log x_{jq}} -p_j(\sum_{q}{x_{jq}}-c_j),
\end{equation}
where we notice that the variables $p_j$ are `dual pricing' variables.  According to \cite{lei_ying2013communication}, we solve the optimal transport plan through the following iterations
\begin{align}
    x^{(k+1)}_{jq}&= \exp\left(\frac{m_{jq} - p^{(k)}_{j}}{\lambda} -1 \right),\;\forall j\in J, q\in Q,  \\
    p^{(k+1)}_j&= p^{(k)}_j - \gamma (\sum_{q}{x_{jq}} - c_j)^+_{p^{(k)}_j}, j\in J,
\end{align} 
where $\gamma$ refers to the step size and 
\begin{equation}
    (x)^+_y = \begin{cases}
    x & y>0; \\ 
    max(x,0) & y=0
    \end{cases}
\end{equation}

% \subsection{A numerical demo}
% We now do a demonstration regarding the no-adversary case for the player. We refer to the network topology as depicted in Figure \ref{fig:ot1_no_adversary}, where there are two source nodes and four edges. 

\section{Adversarial regularized optimal transport}
\label{sec:adv_OT_reg}

We now introduce the concept of `adversarial organizer', who manages the allocation of opportunistic criminals among neighborhoods to jeopardize the safety of those neighborhoods and thus the undermine the revenue generated by police allocation.  Different from \cite{zhang2016allocation_opportunity_criminals} and \cite{curtin2010police_patrol} that study individual opportunistic criminals,  we treat crime activity as a group phenomenon since the structure of social networks of individuals play an important role in characterizing his delinquent behaviors.

To characterize the asymmetric information structure and the unknown influence on the data in seeking an optimal transport, we consider a Bayesian game $\mc{G}$ between a dispatcher and an adversary. 

\paragraph{The adversary's information structure} The adversary has private information named his type \cite{harsanyi1967games_I}, which can be constructed in the following way. We denote $\theta_{q}=\{1,2\}$ as adversary's type space on the destination node $j$ and a generic type on the target node is denoted as $\tha_q$. We denote $\tha_q=1$ if the criminals at the target node $j$ commits minor crimes and $\tha_q=2$ if they commit major crimes. Minor crimes incurs smaller loss on dispatcher's revenue Then the overall adversary's type $\theta$ can be obtained by computing Cartesian product among all the `sub-types' as follows:
\begin{align}
    \theta =  \underset{q\in Q}{\bigotimes} \;\theta_{q}  ,\;\Theta = \underset{q\in Q}{\bigotimes} \;\Theta_{q},\; \tha\in\Tha,\;\tha_q\in\Tha_q,\;\forall j\in J. 
    \label{eq:construction_type_space}
\end{align}

\paragraph{Beliefs/types} In the Bayesian game setting, the dispatcher does not know the type the target nodes. He develops a common belief $(\mu(\theta))_{\theta\in \Theta}$ , over the adversary's type space \eqref{eq:construction_type_space} that reflects the conditions of the target nodes. We assume the criminal actions taken at every target node $q$ has a unique type $\tha_q \in \{1,2\}$, where $\tha_q = 1$ means minor crime offenders, while $\tha_q = 2$ means major crime offenders. We therefore have 
\begin{align}
    \mu(\tha) = \prod_{q\in Q}{\mu(\tha_q)}.
\end{align}

\paragraph{Action spaces} Similar to the situation in section \ref{sec:formulation}, a typical dispatcher's action can be represented by a transport plan $x\in\mR^{|J|\times |Q|}$, with each row $x_j,\;j\in J$ being the transport rate vector from a specific source node $j$ and $x_{jq}$ as the transport rate along the link $\{j,q\}$. We shall assume that the transport rate is always non-negative and upper bounded because of the limited capacity of resource that every node owes. Referring to the constraints in \eqref{eq:planner_action}, we construct the strategy space for the dispatcher as $X$:
\begin{align} 
 X_j &:= \{x_j\in \mR^{M}\;\Big|x_j\geq 0,\;\sum_{q\in Q}{x_{jq}}\leq c_j\}, \\
    X &:= \underset{j\in J}{\bigotimes} X_j = \{x\in \mR^{N\times M}\;\Big|\;x\geq 0,\;Bx\leq c\},
    \label{eq:planner_action}
\end{align}
where $B$ is the connection matrix described in \eqref{eq:01mtx} and $c$ is the vector of capacities of the source nodes. 

A generic behavioral strategy of the adversary is denoted as $\xi: \Theta \rightarrow \Xi(\theta),\; \xi(\theta) = (\xi_{q}(\tha_q))_{\substack{j\in J \\ q\in Q}}$, whose entries represent the increase of crime perception regarding the neighborhood $\{j,q\}$. Meanwhile, the magnitude of a criminal's influence upon the perception of the crime activity of a neighborhood should be bounded. 

% \begin{equation}
%      \xi_{q}(\tha_q) :=\begin{cases}
%       \left\{\xi_{q}: \tha_q \rightarrow \mR^{d}_+\Big |\; {\; n_{jq}\leq m_{jq} + \xi_{q}(\tha_q)\leq 2n_{jq}} \right\} & \theta_{q}=1, \\
%      \left\{\xi_{q}: \tha_q \rightarrow \mR^{d}_+\Big |\; {\; m_{jq} + \xi_{q}(\tha_q)\leq n_{jq}} \right\}& \tha_{q} = 0,
%      \end{cases}
% \end{equation}

\begin{equation}
     \xi_{q}(\tha_q) :=\begin{cases}
      \left\{\xi_{q}: \tha_q \rightarrow \mR^{d}_+\Big |\; {\;  \xi_{q}(\tha_q)\leq \underline{n}_{q}},\;q\in Q \right\} & \theta_{q}=1, \\
     \left\{\xi_{q}: \tha_q \rightarrow \mR^{d}_+\Big |\; {\; \xi_{q}(\tha_q)\leq \bar{n}_{q},\;q\in Q} \right\}& \tha_{q} = 2,
     \end{cases}
\end{equation}
where $m_{jq}$ is a parameter in the utility functions representing `perception' of crime rate upon every neighborhood. 
The adversary's action space from the network's viewpoint can be written as a direct product of action spaces of every target node. Thus we have

\begin{equation}
\begin{aligned}
\Xi(\theta)&=\bigotimes_{\{j,q\}\in \mc{E}}{\xi_{q}(\tha_q)} = \bigotimes_{j\in J}{\Xi_{j}(\theta_{q})}. 
     \label{eq:adversary_action}
\end{aligned}   
\end{equation}
 There are `hot spots' of crime regions, where gang bangers gather around and organized crimes happen and there are also `green' areas with very few crime reports every year. Dispatching police resources into `hot spots' incurs higher revenue since `hotpot' places require more police to help suppress illegal activities.

\paragraph{Stage Utility functions} 
We now formulate the dispatcher and the adversary's utitlity/cost functions the game.

The dispatcher leverages two purposes: on one hand, he makes a `revenue' by allocating resources along different links; on the other hand, he also induces `fairness' in the allocation process, making sure that the resources transported along different links do not vary too much. He maximizes his expectation of the sum of revenue and the fairness penalty over his belief upon the types of the target nodes. We denote ${U}^{(p)}: X\times \Xi \rightarrow \mR$ as the utility function for the dispatcher given the strategy profile of $(x,\xi)\in X\times \Xi$ as 

\begin{equation}
     U^{(p)}(x, \xi) 
    = \sum_{\substack{\{j,q\}\in\mc{E}_j \\ \tha_q\in \{1,2\}}}{(m_{jq}+ \tha_q\xi_{q}(\theta_q))\mu(\tha_q)x_{jq}}  - \lambda x_{jq}\log x_{jq}.
    \label{eq:planner_utility}
\end{equation}
% Thus he faces the following optimization problem under the adversary's action $\xi(\theta)$:
% \begin{equation}
% \begin{aligned}
%     \underset{x}{\max}&\;\sum_{j\in J}{\sum_{\{j,q\}\in\mc{E}_j}{\sum_{\theta \in \Theta}{(m_{jq} + \xi_{q}(\theta))\mu(\theta) x_{jq} - \lambda x_{jq}\log x_{jq}}}}\\
%     \text{s.t.}&\;\;Bx\leq C,\\
%     &\;\;x\geq 0,
%     \label{eq:regular_ot_adv}
% \end{aligned}
% \end{equation}
The adversary's utility function $U^{(a)}: X\times \Xi \times \Tha \rightarrow  \mR$ not only depends on the actions of the adversary and of the dispatcher, but also depends on the adversary's private type, which is a Cartesian product of the `sub-types'` of every target node. According to \cite{becker1968crime_initial}, from the economist's viewpoint the adversary's utility function $U^{(a)}: \Xi\times \Tha\rightarrow \mR$ consists of also two terms: the `revenue' minus the `expectation of punishment'. The `revenue' stands for the `benefits', either financial or social, that criminals directly obtained from committing crimes. We hereby assumes the revenue is negation of the police's revenue obtained from allocation polices forces, that is, the first term in \eqref{eq:planner_utility}.

The terms of `expected punishment' is a product of the probability of getting caught and the cost of punishment (either fine, probation, imprisonment or even execution). We assumes criminals who commit major crimes will benefit a lot more than the criminals who commit minors, but also felony criminals run a higher risk of getting caught as well as a receiving serious punishment. Inspired by \cite{ehrlich1973illegal_probability_caught}, the criminals' probability of getting caught on every neighborhood relies positively on the amount of police resources and negatively on the intensity of committing the crime. So we can express the criminal's expected penalty from committing criminal action $\xi_{q}$ based on the type $\tha_q\in \Tha_q$ as follows:
\begin{equation}
    \nu_{jq}(\xi_{q}(\tha_q)) = c_{jq} \xi_{q}^{-\beta_2}(\tha_{q})x^{\beta_1}_{jq},
\end{equation}
where $\beta_1,\beta_2\in[0,1]$ are coefficients.
For every type of the adversary $\tha\in\Tha$, we thus formulate the adversarial organizer's utility function $U^{(a)}$ as   
\begin{equation}
     U^{(a)}(x,\xi(\theta))= {\sum_{\{j,q\}\in\mc{E}_j}{c_{jq} \xi_{q}^{-\beta_2}(\tha_{q})x^{\beta_1}_{jq}+ (m_{jq}+\tha_q\xi_{q}(\tha_{q}))x_{jq}}},
    \label{eq:adversary_utility}
\end{equation}
 To summarize and adopt the theories of potential games \cite{shapley1996potential} , the adversary faces the following optimization problem:

% \begin{equation}
% \begin{aligned}
%  \underset{\substack{\xi_0\in \Xi(0) \\ \xi_1\in \Xi(1) }}{\min}&\;\sum_{j\in J }{\sum_{\{j,q\}\in\mc{E}_j}{c_{jq}(\xi^{-\beta_2}_{jq}(1)+\xi^{-\beta_2}_{jq}(0)) x^{\beta_1}_{jq}+ (m_{jq}+(\xi_{q}(0)+\xi_{q}(1))x_{jq}}}, \\
%  \text{s.t}. &\;\;n_{jq}\leq m_{jq} + \xi_{q}(1)\leq 2n_{jq}, \\
%  &\;m_{jq} + \xi_{q}(0)\leq n_{jq}
% \end{aligned}
% \label{prob:adversary_opt2}
% \end{equation}

\begin{equation}
\begin{aligned}
 \underset{\substack{\xi_1\in \Xi(1) \\ \xi_2\in \Xi(2) }}{\min}&\;{\sum_{\{j,q\}\in\mc{E}_j}{c_{jq}(\xi^{-\beta_2}_{jq}(2)+\xi^{-\beta_2}_{jq}(1)) x^{\beta_1}_{jq}+ (m_{jq}+(\xi_{q}(1)+2\xi_{q}(2))x_{jq}}}, \\
 \text{s.t}. &\;\;{\xi_{q}(1)} \leq \underline{n}_{q}, \;q\in Q, \\
 &\;{\xi_{q}(2)} \leq \bar{n}_{q},\;\;q\in Q
\end{aligned}
\label{prob:adversary_opt}
\end{equation}

We can now define the Bayesian game to characterize the adversarial optimal transport as follows:
\begin{Def}[Bayesian game of adversarial transport]
We define the Bayesian game of adversarial transport $\mc{G}$ as a tuple
\begin{equation}
 \mc{G} = \langle\Xi,X,U^{(a)}, U^{(p)},\Theta, \mu \rangle,
 \label{def:Bayesian_games}
\end{equation}
where $\Xi,X$ are action spaces of the adversary and the dispatcher characterized in \eqref{eq:planner_action} and \eqref{eq:adversary_action}, and $U^{(a)},U^{(p)}$ are utility functions of the dispatcher and the adversary defined in \eqref{eq:planner_utility} and \eqref{eq:adversary_utility} respectively. The adversary has a non-singleton type space denoted as $\Theta$ associated with a common belief $\mu$.
\end{Def}

%Notice we should also pay attention to the uniqueness of the Bayesian equilibrium, which in general does not hold. 
\section{Analysis of Static Game}
\label{sec:compute_bayesian_equilibrium}
In section \ref{sec:formulation} we come up with formulations of a static Bayesian game and a dynamic Bayesian game to characterize the optimal transport under adversarial setting.
We now solve the Bayesian equilibrium concept raised in section. 
\subsection{Existence of a solution}
\subsubsection{\textbf{Bayesian equilibrium}}

The dispatcher aims at maximizing his expected  regularized utility function, while the adversary attempts to maximize his utility function. 
Thus we come up with the definition of the Bayesian equilibrium for regularized adversarial optimal transport problem as follows. 

\begin{Def}[Bayesian equilibrium for adversarial OT]
\label{def:Bayesian_equilibrium}
A Bayesian equilibrium for the game $\mc{G}$ (these are parameters of the game) of adversarial transport is a strategy profile $(x^*, \{\xi^*(\theta)\}_{\theta\in \Theta}),$ such that neither player has incentive to unilaterally deviate from the strategy to increase their expected payoff. That is,
\begin{align}
 &\forall x\in X,\;   U^{(p)}(x,\xi^*)\geq {U^{(p)}(x^*,\xi^*)} \\
 &\forall \theta\in \Theta,\; \xi(\theta)\in \Xi,\; U^{(a)}(x^*,\xi(\theta))\leq U^{(a)}(x^*,\xi^*(\theta)),
\end{align}
\end{Def}

Given the adversary's behavioral strategy $\xi^*$ (but not his type, which is private information for the adversary), the dispatcher faces the following optimization problem:

\begin{equation}
\begin{aligned}
x^* \in \arg\underset{x\in X}{\max}\;{U}^{(p)}(x,\xi^*),
\end{aligned} \label{eq:br_central_planner}
\end{equation}
while given $x^*$ the adversary faces the maximization problem for all $\theta \in \Theta$ from \eqref{prob:adversary_opt}
\begin{equation}
   \xi^*(\theta)\in \arg\underset{\xi(\theta)\in \Xi}{\min}\;{U}^{(a)}(x^*,\xi(\theta)).
   \label{eq:br_adversary}
\end{equation}
To obtain a Bayesian equilibrium concept of the game is equivalent to solving \eqref{eq:br_central_planner} and \eqref{eq:br_adversary} simultaneously. Since the utility functions for both players are convex and their strategy spaces are compact, we can know that there exists a Bayesian equilibrium.

We have the following existence proposition upon an Bayesian Nash equilibrium of the adversarial optimal transport game $\mc{G}$.
\begin{prop}
Let $\mc{G}$ be the Bayesian games mentioned in definition \ref{def:Bayesian_games}. Let  $(x^*, \{\xi^*(\theta)\}_{\theta \in \Theta})$ be a strategy profile meeting the requirements of Bayesian equilibrium in definition \ref{def:Bayesian_equilibrium}. Then such an equilibrium exists in the game $\mc{G}$.
\end{prop}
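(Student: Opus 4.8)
The plan is to exhibit a Bayesian equilibrium as a fixed point of the joint best-response correspondence and then invoke Kakutani's fixed-point theorem (equivalently, the Debreu--Glicksberg--Fan theorem for continuous concave games). First I would note that, because each coordinate $\xi_q$ of the adversary's behavioral strategy is constrained (see \eqref{eq:construction_type_space}) to depend only on the sub-type $\theta_q\in\{1,2\}$, the whole family $\{\xi(\theta)\}_{\theta\in\Theta}$ is encoded by the finite tuple $\xi=(\xi_q(1),\xi_q(2))_{q\in Q}$, which ranges over the compact convex set $\Xi=\prod_{q\in Q}\big([0,\underline{n}_q]\times[0,\bar{n}_q]\big)$; moreover the $\mu$-averaged payoffs are exactly $U^{(p)}(x,\xi)$ of \eqref{eq:planner_utility} and the objective of \eqref{prob:adversary_opt}, which I will write $\widetilde U^{(a)}(x,\xi)$. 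Thus $\mc{G}$ collapses to an ordinary two-player game on $X\times\Xi$, and $(x^*,\{\xi^*(\theta)\})$ meets Definition \ref{def:Bayesian_equilibrium} if and only if $(x^*,\xi^*)$ is a Nash equilibrium of this reduced game, i.e. a fixed point of $\Phi(x,\xi):=\big(\arg\max_{x'\in X}U^{(p)}(x',\xi)\big)\times\big(\arg\min_{\xi'\in\Xi}\widetilde U^{(a)}(x,\xi')\big)$.

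Next I would check the three hypotheses of Kakutani's theorem for $\Phi$ on $X\times\Xi$. The domain is nonempty, compact and convex: $X=\{x\geq 0:\ Bx\leq c\}$ is a bounded polytope containing $0$, and $\Xi$ is a finite product of intervals. The value sets are convex: for fixed $\xi$, $U^{(p)}(\cdot,\xi)$ is a linear function plus $-\lambda\sum_{j,q}x_{jq}\log x_{jq}$, which is concave under the convention $0\log 0=0$, so the dispatcher's best-response set is convex (in fact a singleton); for fixed $x\geq 0$, $\widetilde U^{(a)}(x,\cdot)$ is a sum of linear terms in $\xi$ and of the maps $\xi_q\mapsto c_{jq}x_{jq}^{\beta_1}\,\xi_q^{-\beta_2}$, each a nonnegative multiple of the convex power $t\mapsto t^{-\beta_2}$ with $\beta_2\in[0,1]$, hence convex, so the adversary's minimizer set is convex. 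Finally, if both objectives are continuous on $X\times\Xi$, Berge's maximum theorem makes each best-response correspondence nonempty-valued, compact-valued and upper hemicontinuous, hence closed-graphed; $\Phi$ inherits these properties, and Kakutani's theorem yields a fixed point, which is the claimed equilibrium.

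The step I expect to be the real obstacle is exactly the continuity of $\widetilde U^{(a)}(x,\cdot)$ at the boundary face $\xi_q=0$, where $\xi_q^{-\beta_2}$ blows up (and the product $x_{jq}^{\beta_1}\xi_q^{-\beta_2}$ is even ambiguous when $x_{jq}=0$ and $\beta_2>0$). I would resolve this one of two standard ways. The clean option is to record the implicit modeling assumption that each criminal influence is bounded below, $\xi_q(\cdot)\geq\underline{\epsilon}_q>0$, so that $\Xi$ is a product of boxes on which $\widetilde U^{(a)}$ is genuinely continuous and the argument above applies verbatim. The more careful option is to keep $\Xi$ as is and treat $\widetilde U^{(a)}(x,\cdot)$ as an extended-real-valued, convex, lower semicontinuous function on the compact box: Weierstrass still gives a nonempty minimizer set, convexity still makes it convex, and the argmin of a jointly lower semicontinuous convex family over a fixed compact set still has closed graph, so Kakutani applies unchanged. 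A final bookkeeping remark: the inequalities in Definition \ref{def:Bayesian_equilibrium} should be read in the optimization senses of \eqref{eq:br_central_planner}--\eqref{eq:br_adversary}, and the fixed-point argument uses only concavity of $U^{(p)}$ in $x$ and convexity of $\widetilde U^{(a)}$ in $\xi$, so the direction of optimization does not affect the conclusion.
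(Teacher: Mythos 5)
Your proof follows essentially the same route as the paper's: both reduce the Bayesian game to an ordinary two-player game on the compact convex product $X\times\Xi$ and apply Kakutani's fixed-point theorem to the upper hemicontinuous, convex-valued joint best-response correspondence. Your version is in fact more complete than the paper's one-paragraph argument, since you actually verify the concavity/convexity of the payoffs in the players' own variables and, importantly, you notice and repair the discontinuity of the adversary's objective at $\xi_q=0$ caused by the term $\xi_q^{-\beta_2}$ --- a point the paper's proof passes over in silence.
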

\begin{proof}
According to Harsanyi \cite{harsanyi1968bayesian_games_II}, it suffices to prove the existence of a Nash equilibrium, which is obtained before both players learn their types. We observe that the best response function for both players \eqref{eq:br_adversary} and \eqref{eq:br_central_planner} are upper semi-continuous. Meanwhile, referring to \eqref{eq:planner_action} and \eqref{eq:adversary_action}, we know that the strategy spaces of both players are compact and convex. So by Kakutani's fixed-point theorem \cite{kakutani1941fixed_point} there exists a Nash equilibrium for the game where the types of the players are not revealed. As a result, there exists a Bayesian equilibrium for the game $\mc{G}$. 
\end{proof}

\subsection{Dynamic adversarial optimal transport}
The crime activities within one region also vary by time as depicted in Figure  We now introduce the dynamic version of the Bayesian game $\mc{G}$ as in definition \ref{def:Bayesian_games} to further study the temporal relations of police resource allocation under the presence of criminal behaviors.

We can define the history of the dispatcher's plans $x^{(t)}\in X^{\otimes t}$ and the history of the criminal's actions $\xi^t \in \Xi^{\otimes t}$ as follows: 
\begin{Def}[History of the criminal's actions and the dispatcher's plans]
For $t\in T$, we denote the history of the criminal's actions $\xi^{(t)}(\tha)\in \Xi^t$ and the history of the dispatcher's plans $x^{(t)}\in X^{\otimes t}$ as follows:
\begin{equation}
    \begin{aligned}
     x^{(t)} & := (x^1,x^2,\dots,x^t), \\
     \xi^{(t)}(\tha) & := (\xi^1(\tha),\xi^2(\tha),\dots,\xi^t(\tha)),\forall \tha\in\Tha
    \end{aligned}
\end{equation}

\end{Def}

The planner's utilities can be written as 
\begin{equation}
     U^{t(p)}(x^t, \xi^t) 
    = \sum_{\substack{\{j,q\}\in\mc{E}_j \\ \tha_q\in \{1,2\}}}{(m_{jq}+ \tha_q\xi^t_{q}(\theta_q))\mu^t(\tha_q)x^t_{jq}  - \lambda x^t_{jq}\log x^t_{jq}}.
    \label{eq:planner_utility}
\end{equation}

The criminal's utilities 
\begin{equation}
    \begin{aligned}
     \underset{\substack{\xi^t(1)\in \Xi \\ \xi^t(2)\in \Xi}}{\min}\; & {\sum_{\{j,q\}\in\mc{E}_j}{c_{jq}(\phi[\xi^t_{q}(1); \xi^{t-1}_{q}(1)]^{-\beta_2}+\phi[\xi^t_{q}(2); \xi^{t-1}_{q}(2)]^{-\beta_2}) x^{\beta_1}_{jq}}} \\
     &+ (m_{jq}+(\phi[\xi^t_{q}(1); \xi^{t-1}_{q}(1)]+\phi[\xi^t_{q}(2); \xi^{t-1}_{q}(2)])x_{jq}, \\
 \text{s.t}. &\;\;\sum_{j\in J}{\xi^t_{q}(1)} \leq \underline{n}_{q}, \;q\in Q \\
 &\;\sum_{j\in J}{\xi^t_{q}(2)} \leq \bar{n}_{q},\;\;q\in Q,
    \end{aligned}
\end{equation}
where $\phi:\mR\rightarrow \mR$ is some smooth thresholding function with the threshold set to be $\xi^{t-1}$. 

\textbf{setting up the threshold function $\phi$} A rough sketch
of $\phi$ should be as Figure \ref{fig:thresholding}. We aim to characterize a `thresholding behavior' of $\xi^t$ upon $\xi^{t-1}$. One choice is as follows:
\begin{equation}
    \phi(\xi^t;\xi^{t-1}) = \xi^{t-1}\mathbf{1}_{\{\xi^t<\xi^{t-1}+\tau\}} + (\xi^{t}-\tau)\mathbf{1}_{\{\xi^{t}>\xi^{t-1}+\tau\}},
\end{equation}
% or ($a$ is some parameter to be adjusted)
% \begin{equation}
%      \phi(\xi^t;\xi^{t-1}) = \xi^{t} - \frac{\xi^{t}}{\max(\frac{\xi^t}{\xi^{t-1}},a+1)},
% \end{equation}

\begin{figure}
    \centering
    \includegraphics[scale=0.4]{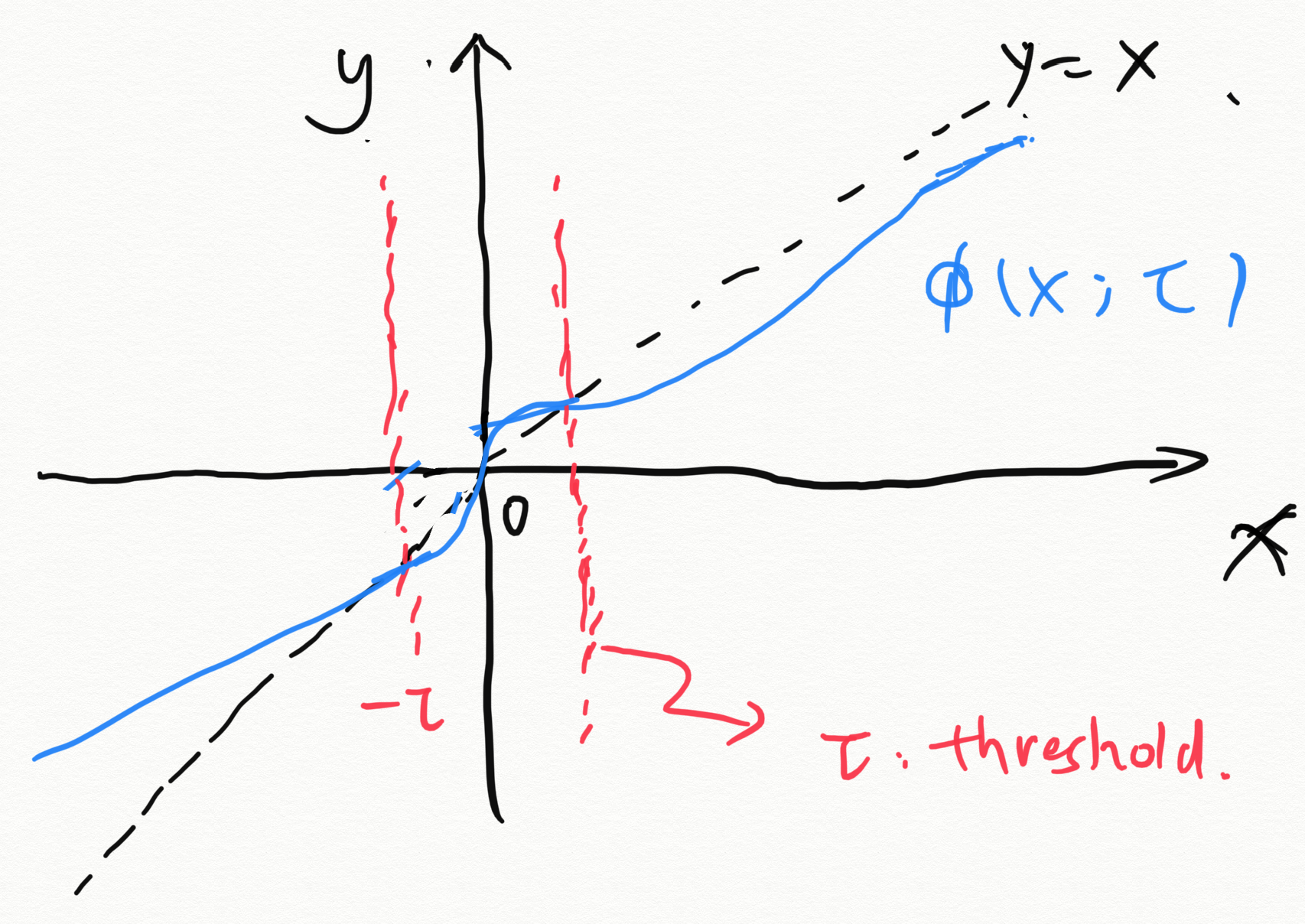}
    \caption{Possible thresholding function $\phi(\xi^t;\xi^{t-1})$. Where $\tau = \xi^{t-1}$ as the criminal's action in the last stage plays the role of threshold. }
    \label{fig:thresholding}
\end{figure}

\subsection{Formulation of dynamic Bayesian game}

\subsection{Solutions}

For a dynamic Bayesian game, many equilibrium concepts can be raised \cite{fudenberg1991game_theory} such as perfect Bayesian Nash equilibrium, sequential equilibrium, and even Bayesian Nash equilibrium. We adopt the following perfect Bayesian Nash equilibrium (PBNE):
\begin{Def}[Perfect Bayesian Nash equilibrium (PBNE)]
Let $\mc{G}^T$ be the dynamic Bayesian game defined in  A perfect Bayesian Nash equilibrium for the dynamic Bayesian game $\mc{G}_T$ consists of the history of strategy profile  $(x^{(t)*},\Xi^{(t)*})$ and a belief system $\mu^*$ such that the following requirements are met:
\begin{enumerate}
    \item (Sequential rationality for the dispatcher) 
    The dispatcher's optimal transport plan $x^{t*}$ at every stage is a maximizer of the dispatcher's utilities given the previous strategy profiles and the current belief $\mu^t$:
    \begin{equation}
    x^t\in\underset{x}{\arg\max}{\;U^{t(p)}(x,\xi^{t*}|x^{(t)*},\xi^{(t)*})}
    \end{equation}
    \item (Sequential rationality for the criminal) The criminal's optimal strategy at every stage $\xi^{t*}(\tha)$ given its type $\tha\in \{1,2\}^{|Q|}$ (the type does not change through stage) is a minimizer of its own cost function $U^{t(a)}$ given the history of the strategy profiles $x^{(t)*},\xi^{(t)*}$ as follows:
    \begin{equation}
        \xi^{t}(\tha)\in \underset{\xi(\tha)\in \Xi}{\arg\max}\; U^{t(a)}(x^{t*},\xi(\tha)|x^{(t)*},\xi^{(t)*}),\;\forall \tha\in\Tha.
    \end{equation}
    \item (Belief update) 
    The belief system of the game is updated after the dispatcher's strategy and the criminal's strategy is reveal at stage $t$:
    \begin{equation}
        \mu^{t+1}(\tha|\xi^t) = \frac{\mu^{t}(\tha)\xi^t(\tha)}{\sum_{\tha'\in\Tha}{{\mu^{t}(\tha')\xi^t(\tha')}}},\;\forall t\in T,\tha \in \Tha.
    \end{equation}
\end{enumerate}
\end{Def}

\section{Distributed Algorithms}

% \begin{equation}
%     \mc{L}(x,\xi^*,p) =  \sum_{\substack{\{j,q\}\in\mc{E}\\ \tha_q\in\tha_q}}{(m_{jq} + \xi^*_{jq}\mu(\tha_q))  x_{jq} - \lambda x_{jq}\log x_{jq}} - p^T(Bx - c).
%     \label{eq:Lagrangian}
% \end{equation}
\begin{figure}
   \begin{tabular}{cc}
    \centering
    \includegraphics[scale=0.3]{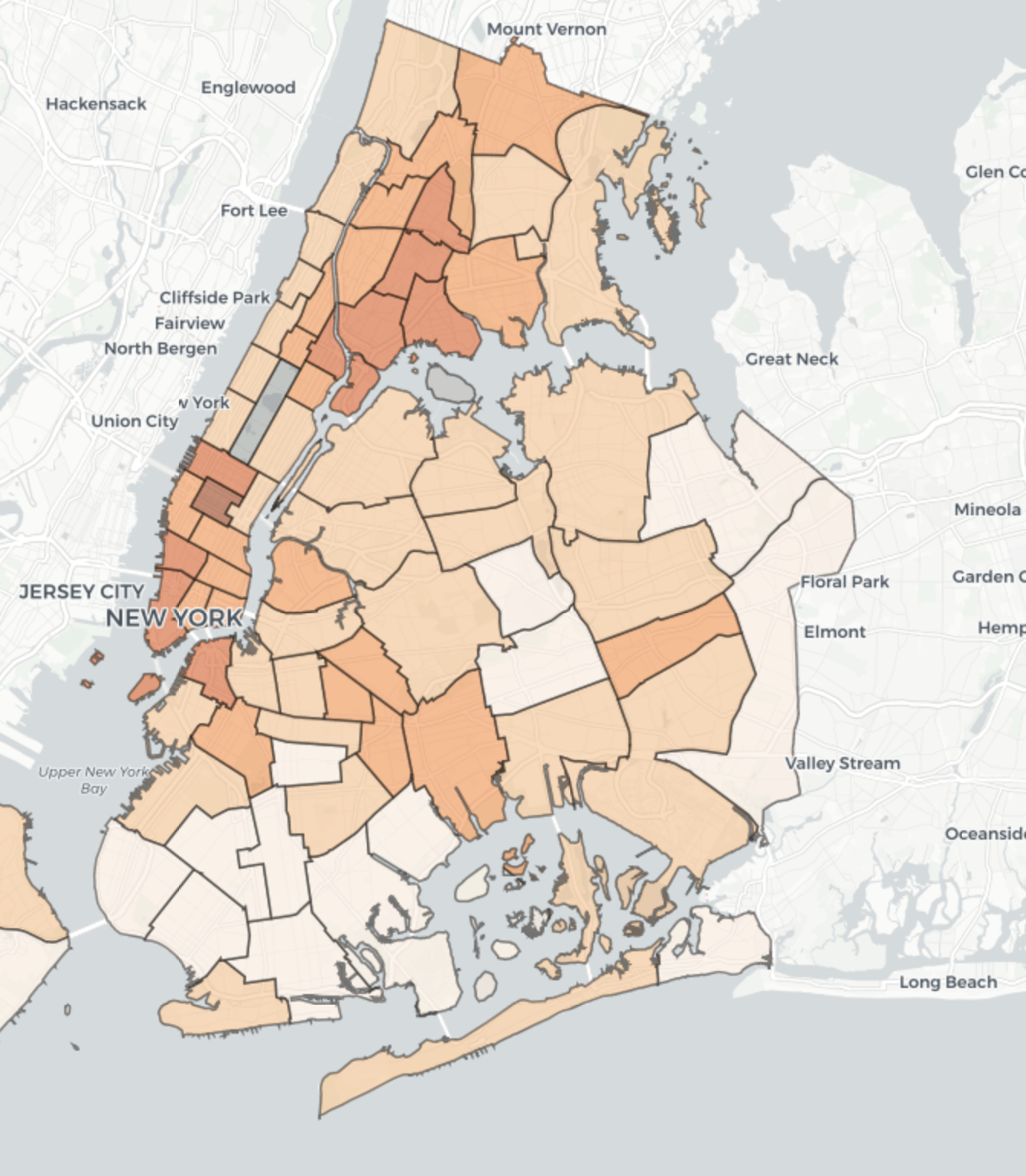}
   &     
    \centering
    \includegraphics[scale=0.32]{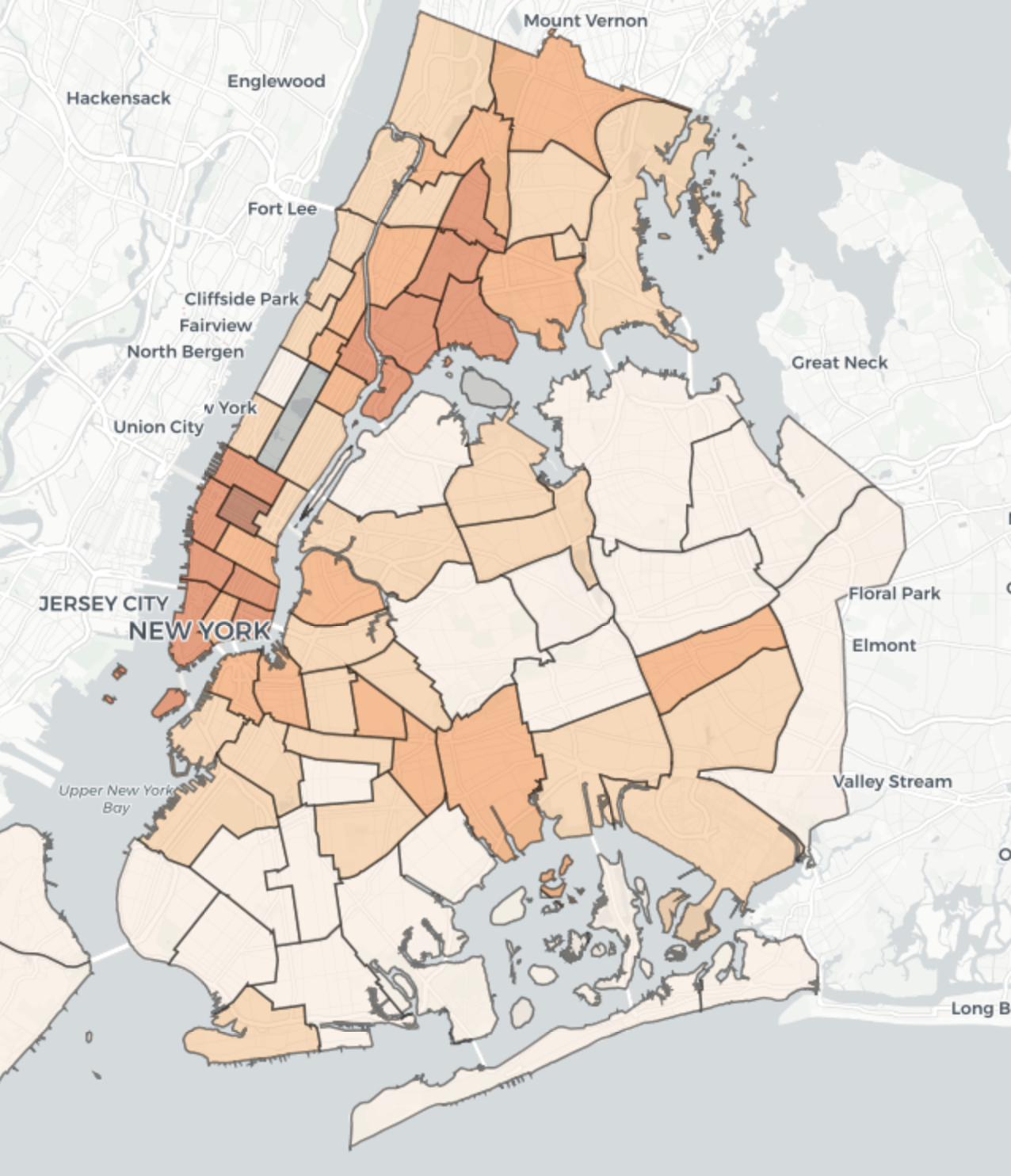}
    \end{tabular}
    \caption{The NYC crime heat map expressed in regions in different time periods. Left: May 2022, Right: June 2022. }
    \label{tab:nyc_crime_map}
\end{figure}

\section{Numerical Experiments}
\label{sec:numerical_demo}
\paragraph*{Crime Neighborhood}
The crime city patrol has been a variety of problems in building smart cities. In particular, a fair, in-time patrol schedule that  allocates police resources among neighborhoods is crucial in detecting and deterring urban crimes.  The most naive way of arranging police forces for patrolling is by planning manual, fixed patrol schedules ahead of time and let the police to commit to the schedule. However fixed patrol schedules do not respond well to emergencies and opportunistic crimes that are committed during the intermittence of shifts. Another branch of models are security pursuit evasion games (PEG)\cite{hespanha2000probabilistic_PEG}, which formulates the police as pursuers and criminals as evaders. However, the main goal of criminals is not just to evade from police, but to benefit from committing crimes.  Another landscape model based on game theories are Stackelberg security games (SSG)\cite{tambe2011Stackelberg_security_game}, in which the criminals are followers and police are leaders. Authors in \cite{pita2011stackelberg_airport} exploits and implements a Stackelberg security game formulation upon the on-demand allocation of security forces among airports. To explore temporal correlations of criminals, authors in \cite{brown2014streets_game_partol} , formulate the patrolling of polices as a Markov decision process (MDP). Authors in \cite{zhang2016patrol_Dynamic_Bayesian_Network_DBN} formulates the patrolling game as a dynamic Bayesian network to better predict the criminals' migration among different neighborhoods. 

All of the models optimizes police force allocation regarding a particular goal such as minimizing response time, lowering the crime rate, etc. In real practice the allocation of police needs to leverage between different goals and so are criminals. Hence we apply our dynamical Bayeisan optimal transport model into formulating police dispatch. We assume  there are two types of criminals: minor criminal offenders and major criminal offenders.Such types are private that the police can only develop a common prior distribution yet do not know the exact realizations. We assume that at every stage, criminals act simultaneously with police patrolling and can adjust their locations of committing crimes at the next moment by observing the police force allocation for the moment. The police forces do not have access to the criminals' actions at previous stages, but criminals do and they can refer to their history of actions to help make decisions for the current stage.  We aim to deploy our formulation of adversarial dynamic optimal transport into crime activities to study temporal and spatial correlation of crimes and how crimes are affect in the presences of police force allocations.

As a concrete example, we present numerical experiments of illustrate the equilibrium achieved both in static and in dynamic Bayesian games to solve optimal transport under the adversarial settings. We consider first a network that contains 2 source nodes and 3 target nodes whose topology is depicted in Figure \ref{fig:transport_plan}.

\begin{comment}
There are two kinds of criminals: opportunistic criminals and strategic criminals. Opportunistic criminals observe the places where the police force is spread thin and take chances to commit crime. They are myopic, selfish and their actions at the next moment only depends on their observations at the current moment, hence Markov.  
Strategic criminals, however, commit crimes in a far more organized, rational and methodical way. Before the crimes, they investigate and learns the patrolling plans of the police and understand how the police allocates the resources to different neighborhood. 
\end{comment}

I 
\begin{figure}
    \centering
    \includegraphics[scale=0.5]{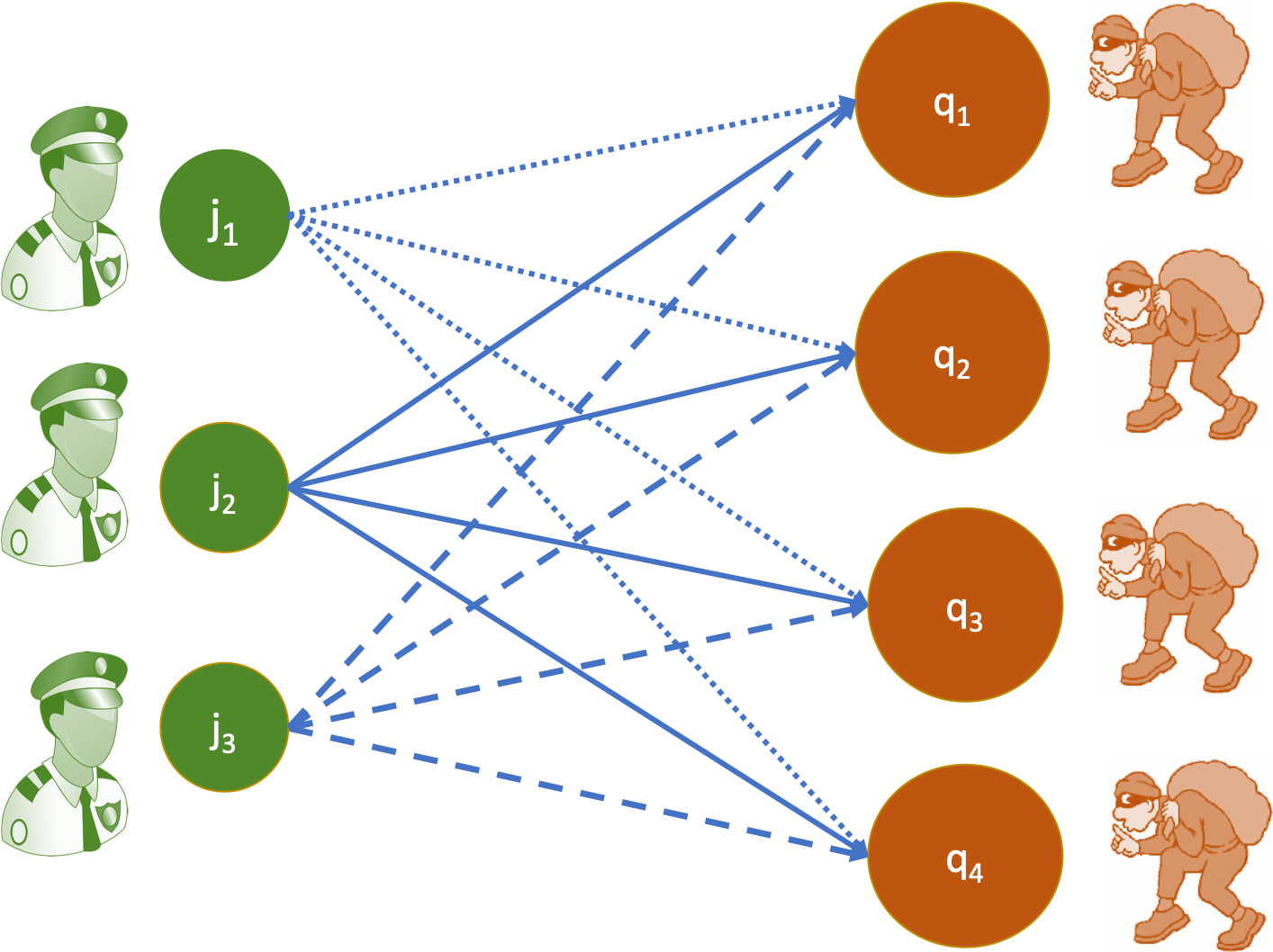}
    \caption{The scheme of resource allocation problem with $m = 3$ police stations (source nodes) and $n = 4$ neighborhoods (target nodes).}
    \label{fig:transport_plan}
\end{figure}

\paragraph{Parameter Setting}
We set the capacities for the source nodes as $c_1 = 4, c_2 =3$ and the coefficient matrix (meaning the cost of transportation along every link) is chosen as follows:
\begin{equation}
 M = \begin{bmatrix}
   1 & 3 & 5 \\
    2 &  5 &  1\end{bmatrix}.
    \label{eq:perception_matrix}
\end{equation}

When there is criminal involved (the Bayesian adversarial OT case), the upper bound matrix $N$ is chosen as 
\begin{equation}
    \underline{n}= \begin{bmatrix}
    6 & 4 & 4   
    \end{bmatrix}, 
    \bar{n} = \begin{bmatrix}
    8 & 10 & 10
    \end{bmatrix}.
\label{eq: criminal_upper_bound}
\end{equation}
The coefficients related to probability of getting caught are 
\begin{equation}
    C = \begin{bmatrix}
    1 & 2 & 3
    \end{bmatrix}.
    \label{punishment_matrix}
\end{equation}
the indices $\beta_1 = 0.5,\beta_2 = 0.5$.
\begin{figure}
    \centering
    \includegraphics[scale=0.5]{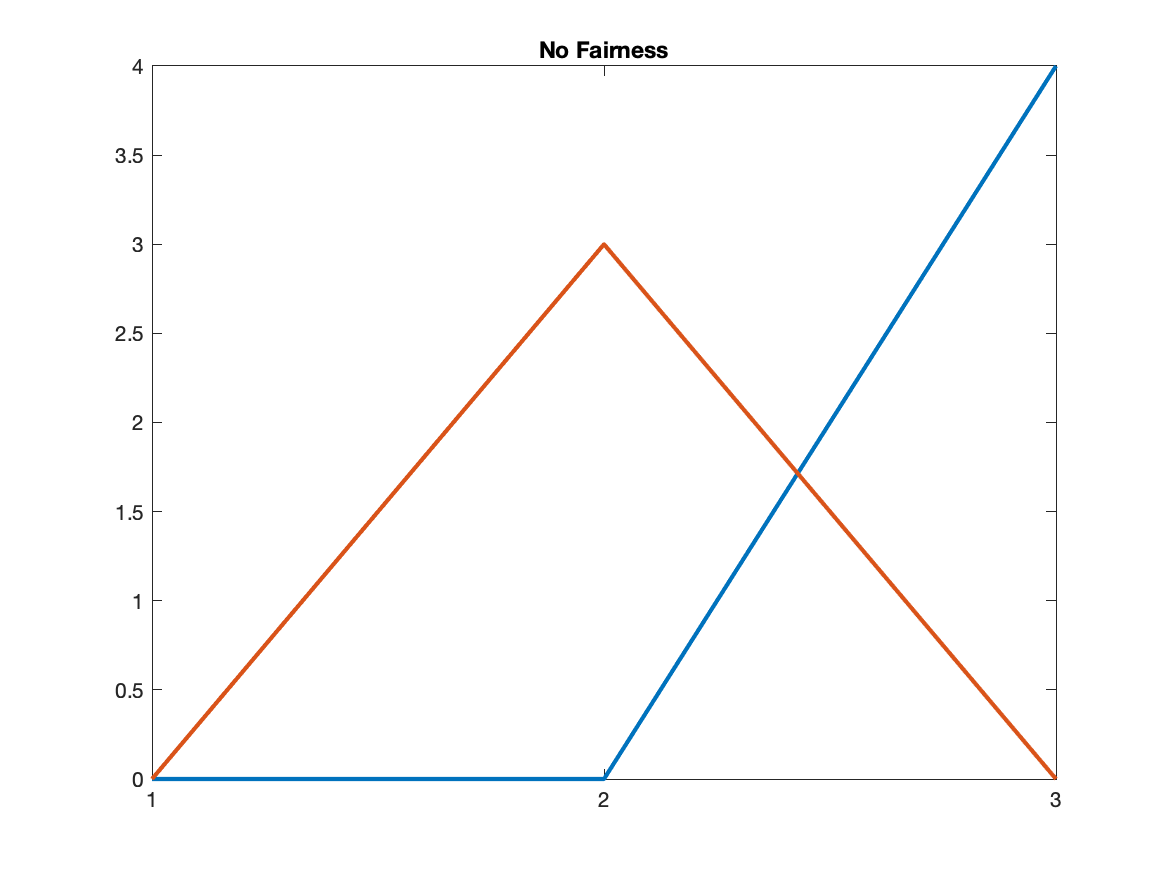}
    \caption{The optimal transport plan of police resource $x^*$ without the fairness term. The dispatcher's utility function is chosen to be \eqref{eq:planner_utility} with $\lambda=0$. We choose the perception matrix $M$ to be in \eqref{eq:perception_matrix}. }
    \label{fig:no_crime_no_fairness}
\end{figure}

\begin{figure}
    \centering
    \includegraphics[scale=0.5]{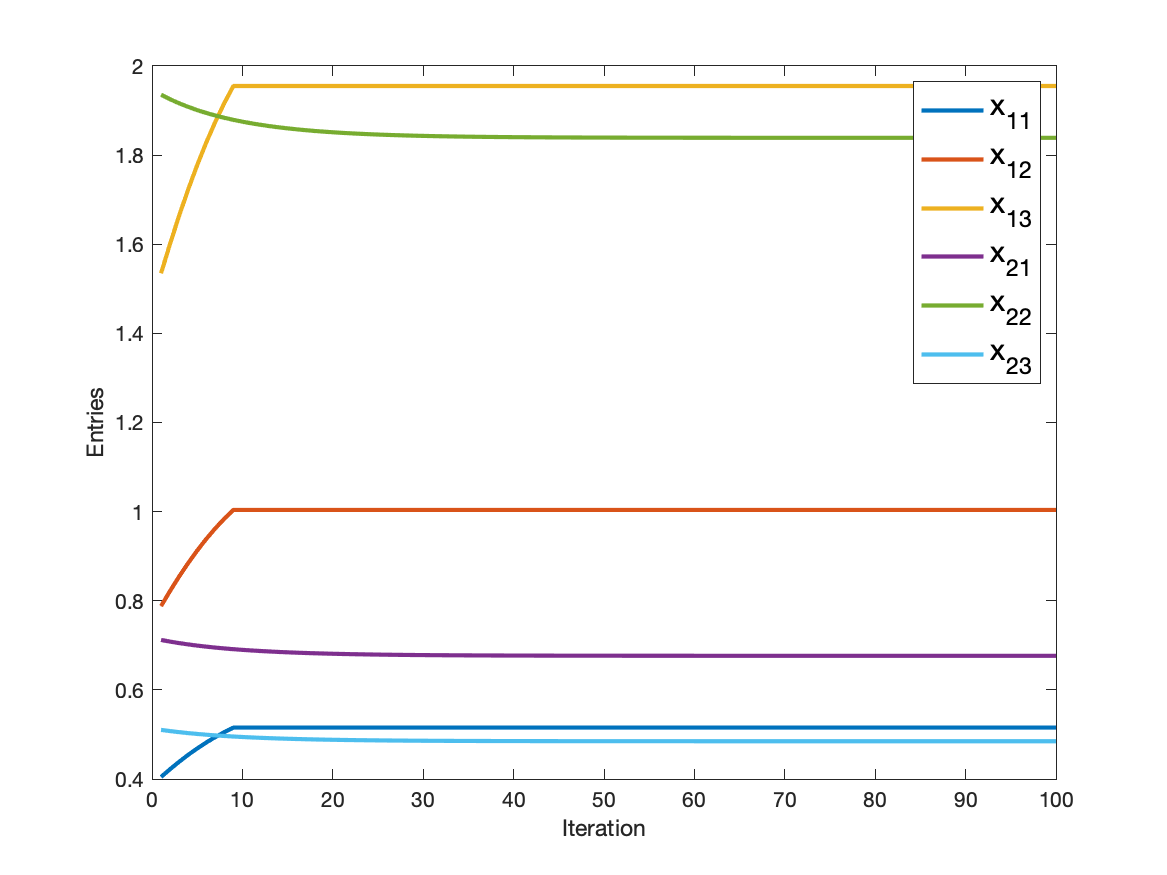}
    \caption{The optimal plan of police resource allocation $x$ with fairness term taken into account. The perception coefficient $M$ is set in \eqref{eq:perception_matrix}. The dispatcher's utility is set in \eqref{eq:planner_utility} with the parameter $\lambda = 3$. }
    \label{fig:no_crime_fairness}
\end{figure}

\begin{figure}
    \centering
    \includegraphics[scale = 0.5]{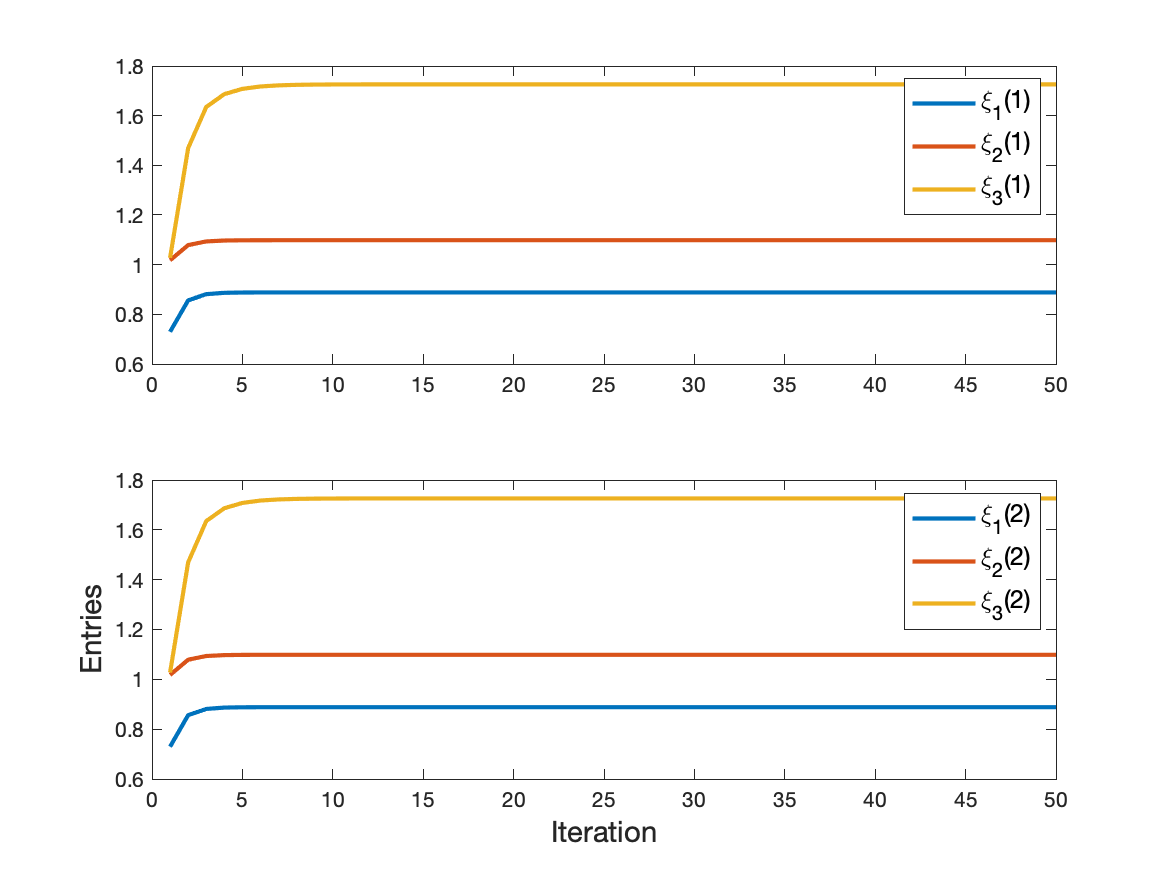}
    \caption{The convergence of criminal actions into equilibrium of the static Bayesian game $\mc{G}$ in definition \ref{def:Bayesian_equilibrium}. We set the perception coefficient matrix $M$ as in \eqref{eq:perception_matrix}, the upper bounds of criminal $\underline{n},\bar{n}$ actions under type $\tha_q = 1, 2$ to be in \eqref{eq: criminal_upper_bound}. The coefficient matrix $C$ corresponding to the punishment is set in \eqref{punishment_matrix}.  }
    \label{fig:crime_fairness_criminal}
\end{figure}
\begin{figure}
    \centering
    \includegraphics[scale=0.5]{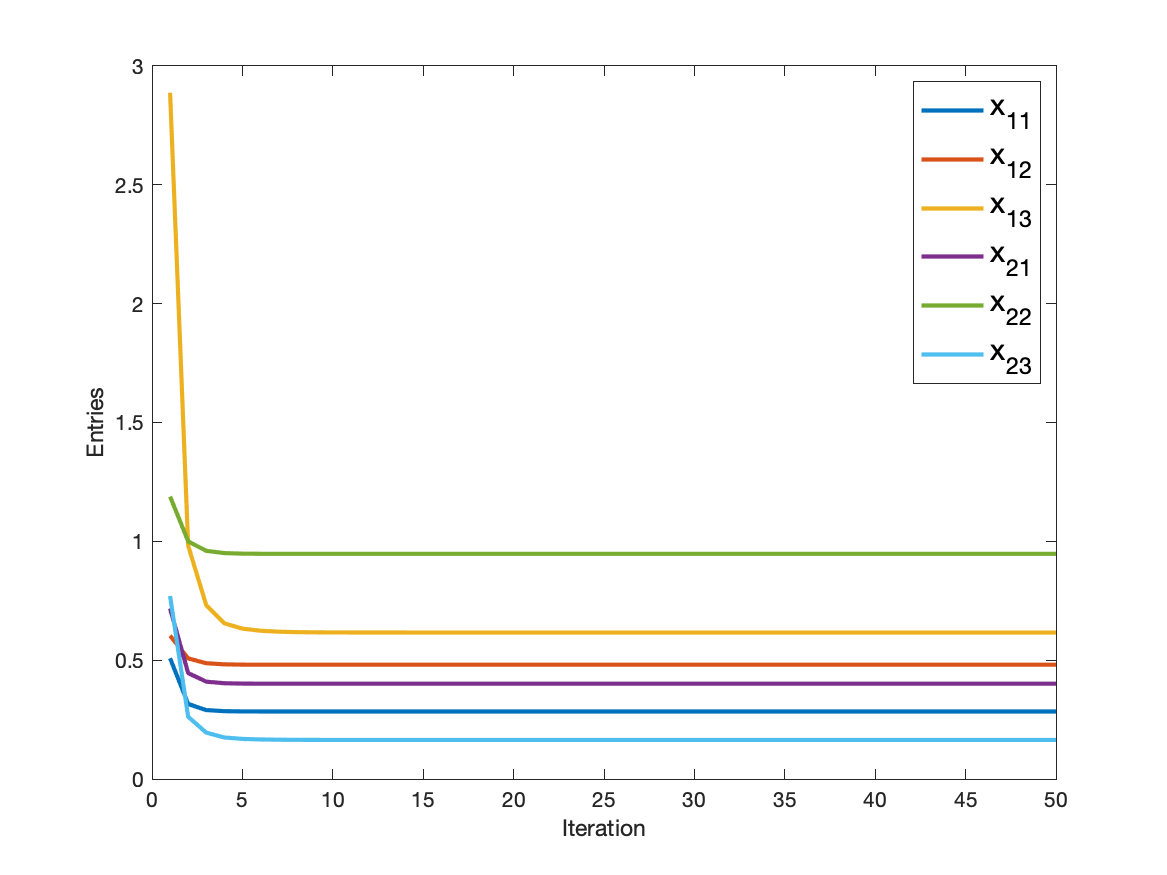}
    \caption{The dispatcher's equilibrium strategies $x^*$ for the static Bayesian adversarial optimal transport game $\mc{G}$. The perception matrix $M$ is given in \eqref{eq:perception_matrix}. The criminal's upper bound activity is given in \eqref{eq: criminal_upper_bound}. The parameter $\lambda=3$. }
 u\label{fig:crime_fairness}
\end{figure}

\begin{figure}
    \centering
    \includegraphics[scale=0.5]{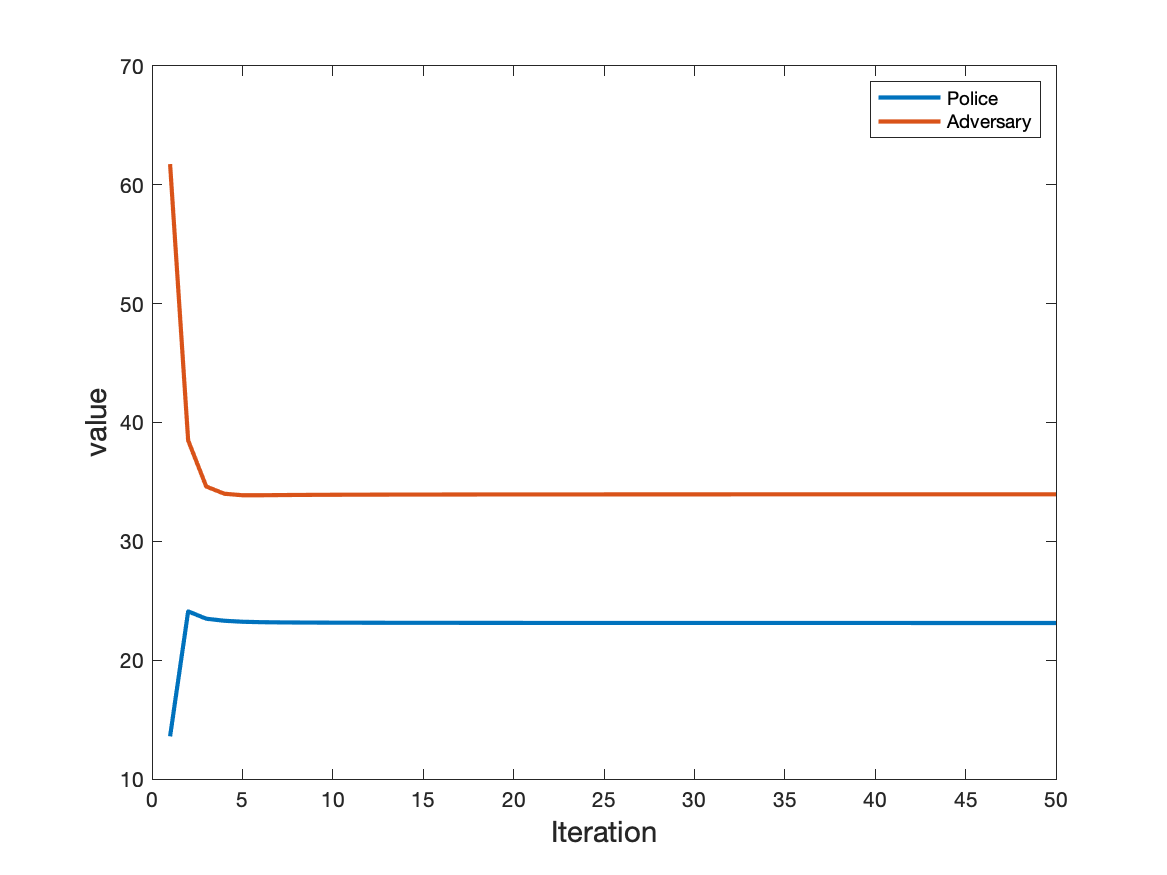}
    \caption{The iterations of the dispatcher's and the criminal's utilities for the Bayesian adversarial optimal transport game $\mc{G}$.}
    \label{fig:util_crime_fairness}
\end{figure}

Since every source node $j$ has two edges connected to it and the data on both edges could be truthful or modified, we could associate the type space of $j$'s adversary as a set containing four elements: $\theta_q = \{\theta_{j_1q},\theta_{j_1q_2}\}$, where $\theta_{j_1q_1},\theta_{j_1q_2}$ could both be $0$ or $1$.  For simplicity we associate a uniform distribution as a common prior over the adversary's type space, that is, the probability that every type is adopted is $\frac{1}{16}$.

1. We first illustrate the impact of the regularization term upon smoothing the transport plans. We set $\lambda=0$ and $\lambda = 3$ in adversary-free scenario in  \eqref{eq:regular_ot} and illustrate the corresponding transport plans in Figure \ref{fig:no_crime_no_fairness} and
Figure \ref{fig:no_crime_fairness}, respectively. We observe that without regularization, every source node tend to distribute all of their resources along the most weighted link they are connected to. With the entropic regularization term, however, every source node begins to distribute their resources more smoothly among its links.  

2. The next example illustrates the influence of the adversary with unknown type upon the social utility of the planner formulated in static games with incomplete information as in Figure \ref{fig:util_crime_fairness}. We compute the Bayesian equilibrium defined in   plot the expected utilities of the dispatcher with and without attack in (a).  We set $\lambda = 3$ and demonstrate the converging transport plans as well as adversary's actions as Bayesian equilibrium of the game in Figure \ref{fig:crime_fairness} and in Figure \ref{fig:crime_fairness_criminal}, respectively. We observe that while the adversary modifies and increase the perception $m$ on some link, he increases the transport rate along that link, which is decided by the dispatcher. 

3. Next we demonstrate how the criminal behavior varies through time.

\section{Conclusion}
\label{sec:conclusion}

We formulate the  problem of seeking an optimal transport plan over networks under the adversarial setting as a Bayesian game where the adversary and the dispatcher have asymmetric information. Specifically, the dispatcher does not know which of the source nodes are malicious and has to develop optimal transport plan based on his subjective probability distribution over the conditions of the source nodes. The asymmetric information happens often in most security setting than complete information. We have also developed a distributed algorithm to solve the Bayesian games, suggesting that large-scale implementation of our formulation is possible since every user can have their own transport rates updated simultaneously without knowing the global topology of the network.

Extensions of the current work can go along several directions. One example is to consider correlated strategies of the dispatchers/the dispatcher and the adversary, while maintaining the assumption that the game is of incomplete information for the dispatchers. Another direction is to consider various information structures for the adversary and the dispatcher. For instance, the dispatcher  Our model also has also the potentiality to be applied large communication networks or urban science context.

\bibliographystyle{plainnat}
\bibliography{rl}

\section*{Appendix A. Placeholder} \label{sec:appendixa}
\addcontentsline{toc}{section}{Appendix A}

\end{document}